\documentclass[letterpaper, 10 pt, conference]{ieeeconf}  

\IEEEoverridecommandlockouts                              

\usepackage{amsmath}
\usepackage{amssymb}
\usepackage{algorithm}
\usepackage{algorithmic}

\usepackage{graphicx}
\usepackage{subcaption}
\usepackage{caption}

\usepackage{booktabs}

\usepackage{makecell}
\usepackage{siunitx}
\newtheorem{theorem}{Theorem}
\newtheorem{proposition}[theorem]{Proposition}
\newtheorem{assumption}[theorem]{Assumption}

\title{\LARGE \bf
FEDBUD: Joint Incentive and Privacy Optimization \\ for Resource-Constrained Federated Learning
}

\author{Tao Liu$^{1}$ and Xuehe Wang$^{1,*}$
\thanks{$^{1}$Tao Liu and Xuehe Wang are with the School of Artificial Intelligence, Sun Yat-sen University, Zhuhai 519082, China (e-mail: {liut353@mail2.sysu.edu.cn; wangxuehe@mail.sysu.edu.cn}).}
\thanks{$^{*}$Corresponding author: Xuehe Wang.}%
}

\begin{document}

\maketitle
\thispagestyle{empty}
\pagestyle{empty}

\begin{abstract}

Federated learning has become a popular paradigm for privacy protection and edge-based machine learning. However, defending against differential attacks and devising incentive strategies remain significant bottlenecks in this field. Despite recent works on privacy-aware incentive mechanism design for federated learning, few of them consider both data volume and noise level. In this paper, we propose a novel federated learning system called FEDBUD, which combines privacy and economic concerns together by considering the joint influence of data volume and noise level on incentive strategy determination. In this system, the cloud server controls monetary payments to edge nodes, while edge nodes control data volume and noise level that potentially impact the model performance of the cloud server. To determine the mutually optimal strategies for both sides, we model FEDBUD as a two-stage Stackelberg Game and derive the Nash Equilibrium using the mean-field estimator and virtual queue. Experimental results on real-world datasets demonstrate the outstanding performance of FEDBUD.

\end{abstract}

\section{INTRODUCTION}
With the development of the Internet of Things, many smart things (mobile phones, wearable devices, electric vehicles) are generating a large amount of data every day. The traditional machine learning paradigm of uploading data from edge nodes to cloud server for centralized model training faces key challenges: on the one hand, it is incapable of taking advantage of growing storage and computational power on edge nodes; on the other hand, straight data transfer between the cloud server and edge nodes may incur malicious privacy attack, thereby lead to privacy leakage over data owner.
In response to these concerns, federated learning provides a solution by supporting edge nodes to train a model locally before uploading it to the cloud server for model aggregation.
This framework leverages edge computation resources while protecting data privacy effectively, and has been widely applied in various scenarios such as smart cities \cite{gandhi2025federated} and smart healthcare \cite{nasajpour2025federated}.

Despite the above advantages, federated learning still faces two bottlenecks: 1) \textit{Differential Attack}: It is a privacy inference technology that enables attackers to infer sensitive information from model parameters. Vanilla federated learning is unable to defend data privacy against it amid model transmission; 2) \textit{Resource Expenditure}: From an economic aspect, edge nodes inevitably consume computation and communication resources when performing model training and uploading. Without enough economic reward, they may be reluctant to participate in federated learning tasks.

For the first bottleneck, researchers have proposed a widely used framework called differential privacy, which enhances the ability of federated learning to defend against malicious attacks by injecting tunable levels of noise into the local model before it is uploaded.
In addition, variants of differential privacy have been developed for specific concerns, including data distribution \cite{yang2015bayesian, triastcyn2020bayesian} and information aging \cite{zhang2023age, lin2024age}. However, few studies investigate differential privacy from an economic optimization perspective.

For the second bottleneck, efforts have been made in incentive mechanism design where the cloud server provides elaborate monetary payment to stimulate edge nodes to participate in federated learning. It ranges from game theory \cite{huang2024collaboration, tang2025game} to auction theory \cite{chen2025dualgfl, tang2025reputation} and contract theory \cite{xie2025privacy}. However, most of them do not involve differential privacy in incentive mechanism design. Some works have proposed a privacy-aware incentive mechanism. Yet, they fail to account for the joint effect of data volume and noise level on payment strategy determination.

Motivated by the above discussion, this paper devises an innovative federated learning system called FEDBUD, which combines privacy and economic concern together by considering the joint influence of data volume and noise level on strategy determination. Specifically, the cloud server controls monetary payment to edge nodes while edge nodes control data volume and noise level that impact the model performance of the cloud server. The key questions in FEDBUD are:
1) \textit{Cloud server determines the optimal payment strategy to balance monetary payment to edge nodes and model performance influenced by edge nodes}.
2) \textit{Edge nodes determine their optimal data volume and noise level simultaneously to balance the resource cost and allocated payment from the cloud server}.

There are three challenges to solve the key questions:
1) \textit{Absence of model performance characterization}. Although model performance is influenced by data volume and noise level, there is a lack of a quantitative relationship linking these factors to model performance, which hinders the strategy determination for the cloud server.
2) \textit{Incomplete information}. Edge nodes' strategies are interdependent, as the allocated payment is based on relative contribution. But in federated learning, the phenomenon of information silo among edge nodes makes individual strategy optimization challenging. 
3) \textit{Resource Constraints}. Computation and communication resources of each edge node are limited during a federated learning task. How to allocate resources for each round to optimize the long-term objective is non-trivial.

To overcome the above challenges and determine the optimal strategies for both sides, we conduct a theoretical analysis for FEDBUD, model a two-stage Stackelberg Game, and derive the Nash Equilibrium using the mean-field estimator and virtual queue. The main contributions in this paper are summarized as follows: 

\begin{itemize}
    \item We propose an innovative federated learning system called FEDBUD, which combines privacy and economic concern together by considering the joint influence of data volume and noise level on strategy determination.
    \item We conduct a theoretical analysis on FEDBUD and uncover a quantitative tie linking model performance with data volume and noise level.
    \item We model FEDBUD as a two-stage Stackelberg game. By means of backward reduction, we explore the optimal strategies of both edge nodes and cloud server using the mean-field estimator and virtual queue.
    \item We conduct experiments on real-world datasets to validate the viability and efficiency of FEDBUD compared with other benchmarks.
\end{itemize}

\section{Problem Formulation and Analysis}
\subsection{Federated Learning with Privacy Protection}
A typical federated learning system comprises a cloud server and $N$ edge nodes with $T$ communication rounds. In the system, the cloud server contains a global model with parameters $\mathbf w$ and edge node $k \in N$ holds a set of privacy data of $\mathcal{B}_k^t$ with volume of $B_k^t = |\mathcal{B}_k^t|$ at round $t$. 
The loss function of edge node $k$ based on global model parameters $\mathbf w$ is defined as 
\begin{equation}
F_k(\mathbf w ; \mathcal{B}_k^t) = \frac{1}{B_k^t}\sum_{j=1}^{B_k^t} f(\mathbf w; x_k^j, y_k^j),
\end{equation}
where $f(w; x_k^j, y_k^j)$ is the loss function of each data point $\{x_k^j, y_k^j\}\in\mathcal{B}_k^t$. 

During each round, the cloud server distributes its global model parameters $\mathbf{w}^t$ to edge nodes. Then, edge node $k$ performs a local model update based on its own data by
\begin{equation}
\label{formu:local}
\mathbf w_k^{t+1} = \mathbf w^t - \eta\nabla F_k(\mathbf w^t ; \mathcal{B}_k^t) + \mathbf n_k^t,    
\end{equation}
where $\eta$ is the learning rate and $\nabla F_k(\mathbf w^t ; \mathcal{B}_k^t)$ is the loss gradient of nodes $k$ at round $t$. $\mathbf n_k^t \sim N(0, {\sigma_k^t}^2)$ is noise injected into the local model for data privacy, where $\sigma_k^t = \frac{\eta C}{B_k^t \varepsilon_k^t}$ \cite{huang2024collaboration}. $C$ is a constant and $\varepsilon_k^t$ is the privacy budget. The noise level can be manipulated through privacy budget $\varepsilon_k^t$ set by edge node $k$.

Until each node completes local training and uploads local model parameter $\mathbf{w}_k^t$ to the cloud server, it will aggregate them by
\begin{equation}
\label{formu:aggre}
\mathbf{w}^{t+1} = \sum_{k=1}^N \frac{B_k^t}{\sum_{i=1}^N B_i^t} \mathbf{w}_k^{t+1}.
\end{equation}

Subsequently, the cloud server launches a new global model $\mathbf{w}^{t+1}$ to each edge node for the next round's training.

The goal of federated learning is to find the optimal model parameters $\mathbf w^*$ to minimize the global loss function, which is represented as
\begin{equation}
\mathbf w^* = \arg \min_\mathbf w F(\mathbf w) = \sum_{k=1}^N \frac{B_k^t}{\sum_{i=1}^N B_i^t} F_k(\mathbf w; \mathcal{B}_k^t),    
\end{equation}
where $\mathcal B^t = \cup_{k=1}^N \mathcal B_k^t$ is total data used for model training at round $t$.

\subsection{Convergence Analysis for Federated Learning with Privacy Protection}
Convergence analysis for model performance is provided in this section. In practice, it is challenging to derive accurate model performance in closed form. Therefore, we approximate it with a convergence upper bound, which takes into consideration the impact of data volume and privacy budget on model performance.
Before that, we introduce some assumptions on the local loss function $F_k(\mathbf w; \mathcal{B}_k^t)$, which have been widely used in previous work \cite{wang2023federated, luo2024adaptive}.

\begin{assumption}
\label{ass:1}
For $k \in \{1, ..., N\}, t \in \{0, \cdots, T-1\}$, $F_k(\mathbf w; \mathcal{B}_k^t)$ is $\rho-$Lipschitz, i.e., $\forall \mathbf w_1, \mathbf w_2, F_k(\mathbf w_1 ; \mathcal B_k^t) - F_k(\mathbf w_2; \mathcal B_k^t) \leq \rho \Vert \mathbf w_1 - \mathbf w_2 \Vert_2$.
\end{assumption}

\begin{assumption}
\label{ass:2}
For $k \in \{1, ..., N\}, t \in \{0, \cdots, T-1\}$, $F_k(\mathbf w; \mathcal{B}_k^t)$ is $\mu-$strong convex, i.e., $\forall \mathbf w$, $F_k(\mathbf w ; \mathcal B_k^t)$ satisfies $F_k(\mathbf w ; \mathcal B_k^t) - F_k(w^*) \leq \frac{1}{2\mu} \Vert \nabla F_k(\mathbf w ; \mathcal B_k^t) \Vert_2^2$.
\end{assumption}

\begin{assumption}
\label{ass:3}
For $k \in \{1, ..., N\}, t \in \{0, \cdots, T-1\}$, non-iid degree is bounded, i.e., $\Vert \nabla F_k(\mathbf w; \mathcal{B}_k^t) - \nabla F(\mathbf w) \Vert_2 \leq \lambda_k^t$.
\end{assumption}

\begin{assumption}
\label{ass:4}
For $k \in \{1, ..., N\}, t \in \{0, \cdots, T-1\}$, $\mathbf n_k^t$ is zero-mean and variance-bounded, i.e., $\mathbf n_k^t \sim N(0, {\sigma_k^t}^2)$ with $\sigma_k^t = \frac{\eta C}{B_k^t \varepsilon_k^t}$.
\end{assumption}

Then, the convergence analysis is given as follows.

\begin{theorem}
\label{the:1}
Under Assumptions \ref{ass:1}-\ref{ass:4}, with $\eta \leq \frac{1}{\rho}$, the convergence upper bound after $T$ rounds of global training can be formulated as 
\begin{align}
    & E[F(\mathbf w^T) - F(\mathbf w^*)] \notag \\
    \leq & {\kappa_1}^T E[F(\mathbf w^0) - F(\mathbf w^*)] \notag \\
    &+ \sum_{t=0}^{T-1} {\kappa_1}^{T-1-t} \left(\kappa_2\sum_{k=1}^N \frac{B_k^t}{B^t} \lambda_k^t + 
    \kappa_3 \sum_{k=1}^N \frac{\eta^2 C^2}{{B^t}^2 {\varepsilon_k^t}^2}\right),
\end{align}
where $\kappa_1 = 1 + 2\mu\rho \eta^2 - 2\mu\eta, \kappa_2 = \rho\eta^2, \kappa_3 = \frac{\rho d}{2}$.
\end{theorem}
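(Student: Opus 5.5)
The plan is to prove a one-step recursion and then unroll it. Specifically, we will aim for
\begin{equation*}
E[F(\mathbf w^{t+1}) - F(\mathbf w^*)] \leq \kappa_1 E[F(\mathbf w^t)-F(\mathbf w^*)] + \kappa_2\sum_{k=1}^N \frac{B_k^t}{B^t}\lambda_k^t + \kappa_3\sum_{k=1}^N \frac{\eta^2C^2}{{B^t}^2{\varepsilon_k^t}^2},
\end{equation*}
where $B^t=\sum_i B_i^t$. Iterating this inequality from $t=0$ to $T-1$ gives the geometric weights $\kappa_1^{T-1-t}$ and the initial term $\kappa_1^T E[F(\mathbf w^0)-F(\mathbf w^*)]$ directly.

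To set up the recursion, substitute the local update (\ref{formu:local}) into the aggregation (\ref{formu:aggre}). This gives
\begin{equation*}
\mathbf w^{t+1}=\mathbf w^t-\eta \mathbf g^t+\bar{\mathbf n}^t,
\end{equation*}
with $\mathbf g^t=\sum_k \frac{B_k^t}{B^t}\nabla F_k(\mathbf w^t;\mathcal B_k^t)$ and $\bar{\mathbf n}^t=\sum_k\frac{B_k^t}{B^t}\mathbf n_k^t$. We will read Assumption \ref{ass:1} as the usual $\rho$-smoothness assumption (gradient $\rho$-Lipschitz), which is how the constants $\rho\eta^2$ and $\rho d/2$ arise. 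It yields the descent inequality
\begin{equation*}
F(\mathbf w^{t+1})\le F(\mathbf w^t)+\langle\nabla F(\mathbf w^t),-\eta\mathbf g^t+\bar{\mathbf n}^t\rangle+\tfrac{\rho}{2}\Vert -\eta\mathbf g^t+\bar{\mathbf n}^t\Vert^2 .
\end{equation*}
Taking expectation over the noise, the cross terms involving $\bar{\mathbf n}^t$ vanish because the noise is zero-mean and independent of $\mathbf w^t$ (Assumption \ref{ass:4}). The noise contributes $\frac{\rho}{2}E\Vert\bar{\mathbf n}^t\Vert^2=\frac{\rho}{2}d\sum_k (\frac{B_k^t}{B^t})^2{\sigma_k^t}^2$. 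Inserting $\sigma_k^t=\frac{\eta C}{B_k^t\varepsilon_k^t}$ makes the $B_k^t$ factors cancel, leaving exactly $\kappa_3\sum_k\frac{\eta^2C^2}{{B^t}^2{\varepsilon_k^t}^2}$ with $\kappa_3=\rho d/2$. This term is clean.

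For the gradient part, write $\mathbf g^t=\nabla F(\mathbf w^t)+\mathbf e^t$, where $\Vert\mathbf e^t\Vert\le\sum_k\frac{B_k^t}{B^t}\lambda_k^t$ by Assumption \ref{ass:3} and the triangle inequality. Then
\begin{equation*}
-\eta\langle\nabla F,\mathbf g^t\rangle+\tfrac{\rho\eta^2}{2}\Vert\mathbf g^t\Vert^2
\end{equation*}
splits into a multiple of $-\Vert\nabla F(\mathbf w^t)\Vert^2$ plus error terms. Using $\Vert\mathbf g^t\Vert^2\le 2\Vert\nabla F\Vert^2+2\Vert\mathbf e^t\Vert^2$ gives a coefficient of the form $-(\eta-\rho\eta^2)\Vert\nabla F\Vert^2$, up to how the cross term $-\eta\langle\nabla F,\mathbf e^t\rangle$ is absorbed. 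Applying the Polyak--\L{}ojasiewicz inequality from Assumption \ref{ass:2}, $\Vert\nabla F\Vert^2\ge 2\mu(F-F^*)$, turns this into the factor $1-2\mu\eta+2\mu\rho\eta^2=\kappa_1$. The condition $\eta\le 1/\rho$ guarantees that $\eta-\rho\eta^2\ge0$, so the inequality goes in the correct direction.

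The main obstacle is controlling the non-iid error so that it appears linearly, as $\rho\eta^2\sum_k\frac{B_k^t}{B^t}\lambda_k^t$, rather than quadratically. The route I would try first exploits the Lipschitz (bounded-gradient) reading of Assumption \ref{ass:1}, $\Vert\nabla F\Vert\le\rho$, and $\Vert\mathbf g^t\Vert\le\rho$: the cross terms $\eta\Vert\nabla F\Vert\Vert\mathbf e^t\Vert$ and $\rho\eta^2\langle\nabla F,\mathbf e^t\rangle$ are then each bounded by a constant times $\Vert\mathbf e^t\Vert$. Some slack, for example in the $2\Vert\nabla F\Vert^2$ split, must be allotted so that only the $\rho\eta^2\Vert\mathbf e^t\Vert$ piece survives with coefficient $\kappa_2$. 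If the exact constant does not come out, I would tune the Young's-inequality weights, or absorb part of the $-\eta\langle\nabla F,\mathbf e^t\rangle$ term into the descent coefficient. That would keep the stated $\kappa_1$ and $\kappa_2$ under the convention $\Vert\mathbf e^t\Vert\le\sum_k\frac{B_k^t}{B^t}\lambda_k^t$.
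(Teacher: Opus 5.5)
Your skeleton matches the paper's: the smoothness descent step, the vanishing noise cross terms, the variance term $\frac{\rho d}{2}\sum_k (B_k^t/B^t)^2{\sigma_k^t}^2$, PL, and unrolling. Your independence-based noise computation is actually cleaner than the paper's. The gap is the gradient part. You leave it as a plan, and that plan cannot give the stated constants while $\mathbf e^t\neq 0$. The cross term $-\eta\langle\nabla F(\mathbf w^t),\mathbf e^t\rangle$ is first order in $\eta$, of size up to $\eta\rho\Vert\mathbf e^t\Vert$ under your gradient-bound reading. It is therefore a factor $1/\eta$ too large to sit under $\kappa_2\sum_k\frac{B_k^t}{B^t}\lambda_k^t=\rho\eta^2\sum_k\frac{B_k^t}{B^t}\lambda_k^t$. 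Pushing it into $-\Vert\nabla F\Vert^2$ via Young's inequality lowers the descent coefficient below $\eta-\rho\eta^2$, which changes $\kappa_1$.

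No choice of weights rescues this. Take $F(\mathbf w)=\frac{\mu}{2}\Vert\mathbf w\Vert^2$ and every $F_k(\mathbf w)=\frac{\mu}{2}\Vert\mathbf w-\mathbf a\Vert^2$, so $\lambda_k^t=\mu\Vert\mathbf a\Vert$. At $\mathbf w^t=\mathbf a$ you get $\mathbf g^t=0$. The claimed one-step inequality then reduces to $(1-\rho\eta)\mu\Vert\mathbf a\Vert\le\rho\eta$, up to a noise term that vanishes as $\varepsilon_k^t\to\infty$, and this is false for small $\eta$. Your plan also reads Assumption \ref{ass:1} simultaneously as smoothness and as a gradient bound with the same $\rho$, which the paper does not need.

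The missing observation is that, under the paper's definition $F(\mathbf w)=\sum_k\frac{B_k^t}{B^t}F_k(\mathbf w;\mathcal B_k^t)$, the aggregated gradient is exactly $\mathbf g^t=\nabla F(\mathbf w^t)$, so $\mathbf e^t=0$. The paper uses this in its term $A$. It writes $\mathbf w^{t+1}-\mathbf w^t=-\eta\nabla F(\mathbf w^t)+\mathbf n^t$ and gets $A=-\eta E\Vert\nabla F(\mathbf w^t)\Vert_2^2$ with no cross term. The quantity $\lambda_k^t$ enters only in the quadratic term $B_1$, through Jensen's inequality $\Vert\sum_k\frac{B_k^t}{B^t}\nabla F_k\Vert^2\le\sum_k\frac{B_k^t}{B^t}\Vert\nabla F_k\Vert^2$ and the split $\Vert\nabla F_k\Vert^2\le2\Vert\nabla F_k-\nabla F\Vert^2+2\Vert\nabla F\Vert^2$. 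Together with $A$, this gives the coefficient $\rho\eta^2-\eta$ on $\Vert\nabla F\Vert^2$, hence $\kappa_1$ after PL.

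Your linear-versus-quadratic worry is legitimate against that step: the split really gives $\rho\eta^2\sum_k\frac{B_k^t}{B^t}(\lambda_k^t)^2$, and the paper silently writes $\lambda_k^t$. Once $\mathbf e^t=0$, though, the issue disappears. The bound $\frac{\rho\eta^2}{2}\Vert\nabla F\Vert^2\le\rho\eta^2\Vert\nabla F\Vert^2$ already gives the $\kappa_1$ recursion, and the $\kappa_2$ term is just nonnegative slack. Like the paper, this treats the round-$t$ weighted objective as one fixed function across rounds, which the unrolling requires.
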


The detailed proof is provided in Appendix A.1 in the supplementary material.
Equ. (\ref{the:1}) unveils that model performance in FedBUD is influenced by total data volume and privacy budget simultaneously. Apparently, the greater both the total data volume and privacy budget, the better the global model performs.

\section{Game Formulation}
In this section, we formulate a cost optimization problem for the cloud server and a utility optimization problem for each edge node, respectively. Afterwards, we formulate the potential interaction between optimization problems on both sides as a two-stage Stackelberg Game.

\subsection{Cost Optimization of Cloud Server}
\label{sect:1}
The cost of the cloud server consists of two units: accuracy loss of model performance, and monetary payment to edge nodes. Although it is hard to secure the exact form of accuracy loss, we approximate it with the convergence upper bound provided in Equ. (\ref{the:1}). Denote $R^t$ as payment to edge nodes at round $t$, the cost of the cloud server over the time horizon can be formulated as
\begin{equation}
\label{game:cost}
C(\boldsymbol R, \boldsymbol B, \boldsymbol \varepsilon) = \sum_{t=0}^{T-1}\left(\gamma_1 R^t + \sum_{k=1}^N \frac{{\kappa_1}^{T-1-t}\kappa_3\eta^2C^2}{{B^t}^2{\varepsilon_k^t}^2}\right),
\end{equation}
where $\boldsymbol R = \{R^t\}_{t=0}^{T-1}$, $\boldsymbol B = \{ \{B_k^t\}_{t=0}^{T-1}\}_{k=1}^N$, and $\boldsymbol \varepsilon = \{\{\varepsilon_k^t\}_{t=0}^{T-1}\}_{k=1}^N$. In addition, $\gamma_1 > 0$ is a factor to balance the influence between monetary payment and model accuracy loss. When $\gamma_1$ approaches 0, the cloud server prefers model performance enhancement rather than expenditure control.

The optimization problem on the cloud server's side can be formulated as
\begin{equation}
    \label{game:cost opti}
    \min_{\boldsymbol R} C(\boldsymbol R, \boldsymbol B, \boldsymbol \varepsilon).
\end{equation}

\subsection{Utility Optimization of Edge Nodes}
\label{sect:2}
For edge node $k$, the computation resource expenditure amid model training is associated with data volume $B_k^t$, while privacy risk expenditure amid model uploading is associated with privacy budget $\varepsilon_k^t$. We use $\alpha_k H_1(B_k^t)$ and $\beta_k H_2(\varepsilon_k^t)$ to quantify the two terms, respectively. $\alpha_k$ is the unit cost for computation resource, and $\beta_k$ is the unit cost for privacy risk. Both $H_1(\cdot)$ and $H_2(\cdot)$ are convex functions to capture the fact that an edge node's computation resource consumption and privacy risk increase convexly with the data volume $B_k^t$ and privacy risk $\varepsilon_k^t$, respectively. In this work, we choose the quadratic forms of $H_1(B_k^t) = (B_k^t)^2$ and $H_2(\varepsilon_k^t) = (\varepsilon_k^t)^2$, which has been widely adopted in expenditure formulation \cite{zhan2019free, nie2020multi}. Hence, the cost of edge node $k$ at round $t$ can be formulated as
\begin{equation}
    E_k(t) = \alpha_k(B_k^t)^2 + \beta_k(\varepsilon_k^t)^2.
\end{equation}

To stimulate edge nodes to provide high-quality local model parameters efficiently, the payment allocation strategy is formulated as
\begin{equation}
    P_k(t) = \max\left\{0, \frac{\log(B_k^t\varepsilon_k^t)}{\sum_{i=1}^N \log(B_i^t\varepsilon_i^t)} R^t\right\}.
\end{equation}
Under the above strategy, the payment edge node $k$ obtains at round $t$ depends on its data volume $B_k^t$ and privacy budget $\varepsilon_k^t$ compared with that of other edge nodes.

Therefore, the utility function of edge node $k$ over the time horizon is formulated as 
\begin{equation}
    \label{game:utility}
    U_k(\boldsymbol R, \boldsymbol B_k, \boldsymbol B_{-k}, \boldsymbol \varepsilon_k, \boldsymbol \varepsilon_{-k}) = \sum_{t=0}^{T-1} (P_k(t) - E_k(t)),
\end{equation}
where $\boldsymbol B_{-k} = \boldsymbol B \backslash \boldsymbol B_k$, and $\boldsymbol \varepsilon_{-k} = \boldsymbol \varepsilon \backslash \boldsymbol \varepsilon_k$.

In addition, in the real world, computation resource a certain edge node access is limited, while the privacy risk it can bear is also upper-bounded. Thus, we introduce two constraints:
\begin{align}
    \label{game:utility resource}
    \sum_{t=0}^{T-1} (B_k^t)^2 \leq n_k, \sum_{t=0}^{T-1} (\varepsilon_k^t)^2 \leq m_k,
\end{align}
where $n_k$ and $m_k$ are the upper bounds of computation resource and privacy risk for edge node $k$, respectively.

In summary, the optimization problem on the edge nodes' side can be formulated as
\begin{align}
    \label{game:utility opti}
    \max_{\boldsymbol B_k, \boldsymbol \varepsilon_k} & U_k(\boldsymbol R, \boldsymbol B_k, \boldsymbol B_{-k}, \boldsymbol \varepsilon_k, \boldsymbol \varepsilon_{-k}) \\
    s.t. & \sum_{t=0}^{T-1} (B_k^t)^2 \leq n_k, \sum_{t=0}^{T-1} (\varepsilon_k^t)^2 \leq m_k. \notag
\end{align}

\subsection{Stackelberg Game Formulation}
Based on the discussion on Sections \ref{sect:1} and \ref{sect:2}, we can find that optimization problems (\ref{game:cost opti}) and (\ref{game:utility opti}) are influenced by each other, which makes it impossible to derive the optimal strategies for the cloud server and edge nodes individually. To formulate the interaction between the two optimization problems, we model them as a two-stage Stackelberg Game:
\begin{align}
    \label{game:stackel}
    \text{Stage  I}: & \min_{\boldsymbol R} C(\boldsymbol R, \boldsymbol B, \boldsymbol \varepsilon); \notag \\
    \text{Stage  II}: & \max_{\boldsymbol B_k, \boldsymbol \varepsilon_k} U_k(\boldsymbol R, \boldsymbol B_k, \boldsymbol B_{-k}, \boldsymbol \varepsilon_k, \boldsymbol \varepsilon_{-k}), \\
    & s.t. \sum_{t=0}^{T-1} (B_k^t)^2 \leq n_k, \sum_{t=0}^{T-1} (\varepsilon_k^t)^2 \leq m_k, \notag
\end{align}
where the cloud server acts as the leader, and edge nodes respond as followers. By deriving the Nash Equilibrium of this game, we can get a set of mutually optimal strategies between the cloud server and edge nodes in a stable condition.
    
\section{Methodology}
In this section, we explore the Nash Equilibrium of the above Stackelberg Game by means of backward reduction. Firstly, we analyze edge node $k$'s optimal strategy $(\boldsymbol B_k, \boldsymbol \varepsilon_k)^*$ in Stage II given any cloud server's payment $\boldsymbol R$. Then we discuss the optimal strategy $\boldsymbol R^*$ based on $\{(\boldsymbol B_k, \boldsymbol \varepsilon_k)^*\}_{k=1}^N$ in Stage I.

\subsection{Optimal Strategy for Edge Nodes}
Before the analysis of the strategy for edge nodes, we face two key challenges:
1) \textit{Incomplete information}. As shown in Equ. (\ref{game:utility}), deriving edge node $k$'s optimal strategy $(\boldsymbol B_k, \boldsymbol \varepsilon_k)^*$ requires the global knowledge of $\sum_{i=1}^N \log (B_i^t \varepsilon_i^t)$ in the game. Yet $\sum_{i=1}^N \log (B_i^t \varepsilon_i^t)$ is usually kept unknown to edge node $k$ due to inter-edge node information isolation in federated learning tasks.
2) \textit{Resource Constraints}. Equ. (\ref{game:utility opti}) is an optimization problem with a long-term objective function and time-average constraints. Strategies made in former slots will affect latter ones, and it is difficult to derive the optimal strategy for previous slots considering unpredictable circumstances in the future.

To cope with the first challenge, we introduce a mean-field estimator $\phi^t$ to approximate $\sum_{i=1}^N \log (B_i^t \varepsilon_i^t)$. Mathematically, $\phi^t$ is a given function and viewed as a known term here. The estimation of $\phi^t$ will be discussed later in Section \ref{met:3}.

By alternating $\sum_{i=1}^N \log (B_i^t \varepsilon_i^t)$ in Equ. (\ref{game:utility opti}) with $\phi^t$, the optimization problem of edge node $k$ is rewritten as
\begin{align}
    \label{method:phi}
    \max_{\boldsymbol B_k, \boldsymbol \varepsilon_k} & \sum_{t=0}^{T-1} \left(\frac{\log(B_k^t\varepsilon_k^t)}{\phi^t} R^t - \alpha_k(B_k^t)^2 - \beta_k(\varepsilon_k^t)^2\right), \notag \\
    s.t. & \sum_{t=0}^{T-1} (B_k^t)^2 \leq n_k, \sum_{t=0}^{T-1} (\varepsilon_k^t)^2 \leq m_k. 
\end{align}

To handle the second challenge, we proposed an online strategy-making approach based on the Lyapunov drift-plus-penalty framework, which transforms the time-average resource constraints in Equ. (\ref{game:utility opti}) into queue stability problems. Specifically, we define virtual queues as
\begin{align}
    Q_k^{t+1} = \max\left\{Q_k^t + (B_k^t)^2 - \frac{n_k}{T}, 0\right\}, \forall k\in[1, N], \label{method:queue1}\\
    Z_k^{t+1} = \max\left\{Z_k^t + (\varepsilon_k^t)^2 - \frac{m_k}{T}, 0\right\}, \forall k\in[1, N], \label{method:queue2}
\end{align}
with initial condition of $Q_k^1 = 0$ and $Z_k^1 = 0$. The above virtual queues capture accumulated violations of resource constraints. By ensuring the stability of virtual queues, we can guarantee the satisfaction of the time-average resource constraints within a bounded violation error.

Using virtual queues $Q_k^t$ and $Z_k^t$, Equ. (\ref{method:phi}) can be further transformed into single-slot optimization problems. For a particular round $t$, the optimization problem for edge node $k$ is rewritten as
\begin{align}
    \label{method:new utility opti}
    \min_{B_k^t, \varepsilon_k^t} & \gamma_2\left(\alpha_k (B_k^t)^2 + \beta_k (\varepsilon_k^t)^2 - \frac{\log(B_k^t \varepsilon_k^t)}{\phi^t} R^t\right) \notag \\
    & + Q_k^t\left((B_k^t)^2 - \frac{n_k}{T}\right) + Z_k^t\left((\varepsilon_k^t)^2 - \frac{m_k}{T}\right).
\end{align}
It targets to optimize edge node $k$'s utility and the queue stability of $Q_k^t, Z_k^t$ simultaneously, with $\gamma_2 > 0$ working as the weight factor. Note that Equ. (\ref{method:new utility opti}) is an online problem because solving it requires the real-time state of virtual queues.

Given mean-field estimator $\phi^t$ and payment $R^t$ launched by the cloud server, the optimal strategy $(B_k^t, \varepsilon_k^t)$ for edge node $k$ at round $t$ is as follows:
\begin{proposition}
\label{the:2}
For any edge node $k$ at arbitrary round $t$, the optimal strategy $(B_k^t, \varepsilon_k^t)^*$ is
\begin{align}
    (B_k^t)^* = & \sqrt{\frac{\gamma_2R^t}{2\phi^t(\gamma_2\alpha_k + Q_k^t)}}, \label{method:edge node answer 1} \\ 
    (\varepsilon_k^t)^* = & \sqrt{\frac{\gamma_2R^t}{2\phi^t(\gamma_2\beta_k + Z_k^t)}}. \label{method:edge node answer 2}
\end{align}
\end{proposition}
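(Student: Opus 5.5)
Since $\log(B_k^t\varepsilon_k^t)=\log B_k^t+\log\varepsilon_k^t$, I would prove the proposition by observing that the per-round objective in Equ. (\ref{method:new utility opti}) separates additively into a function of $B_k^t$ alone and a function of $\varepsilon_k^t$ alone. The constants $-Q_k^t n_k/T$ and $-Z_k^t m_k/T$ do not affect the minimizer. The problem therefore reduces to two independent one-dimensional minimizations over the positive reals (the domain where the logarithm is defined):
\begin{align*}
g(B) &= (\gamma_2\alpha_k + Q_k^t)B^2 - \frac{\gamma_2 R^t}{\phi^t}\log B, \\
h(\varepsilon) &= (\gamma_2\beta_k + Z_k^t)\varepsilon^2 - \frac{\gamma_2 R^t}{\phi^t}\log \varepsilon .
\end{align*}

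Next I would establish strict convexity on $(0,\infty)$, using $\gamma_2>0$, $\alpha_k,\beta_k>0$, $Q_k^t,Z_k^t\ge 0$ (the virtual queues are nonnegative by construction in Equ. (\ref{method:queue1})--(\ref{method:queue2})), and $R^t\ge 0$, $\phi^t>0$. Under these conditions
\[
g''(B)=2(\gamma_2\alpha_k+Q_k^t)+\frac{\gamma_2R^t}{\phi^t B^2}>0,
\]
and likewise $h''>0$. For $R^t>0$, $g(B)\to+\infty$ both as $B\to 0^+$ (because of the $-\log B$ term) and as $B\to\infty$ (because of the quadratic term). Hence a unique interior minimizer exists and is characterized by the first-order condition
\[
2(\gamma_2\alpha_k+Q_k^t)B-\frac{\gamma_2R^t}{\phi^t B}=0 .
\]
Solving gives $B^2=\gamma_2R^t/\bigl(2\phi^t(\gamma_2\alpha_k+Q_k^t)\bigr)$, and taking the positive root yields Equ. (\ref{method:edge node answer 1}). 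The identical computation for $h$ gives Equ. (\ref{method:edge node answer 2}).

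The main obstacle is that the per-round objective was derived from the $\max\{0,\cdot\}$ payment rule, after substituting the mean-field estimator $\phi^t$ for the true sum of logs. For the statement to be meaningful I would need to state explicitly the standing assumptions $\phi^t>0$ and $R^t>0$ that make the problem well posed. I would also need to check that the resulting $B_k^t\varepsilon_k^t$ can be taken to exceed $1$, so that the clipped payment $P_k(t)$ coincides with the unclipped expression used in Equ. (\ref{method:new utility opti}). If that check fails, I would simply note that the proposition concerns the relaxed problem (\ref{method:new utility opti}) as written, for which the convexity argument above is complete.
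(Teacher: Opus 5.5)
Your proposal is correct and follows the same route as the paper's proof. Both split the per-round objective into separate one-dimensional problems in $B_k^t$ and $\varepsilon_k^t$, solve the first-order condition, and certify the minimum by a positive second derivative; your coercivity argument is a harmless extra, since strict convexity plus the explicit stationary point already suffices. Your bookkeeping is in fact cleaner than the paper's, which misplaces $\gamma_2$ (writing $\gamma_2(\alpha_k+Q_k^t)$, omitting it from the log term, then dropping it in the derivative) and so only recovers the stated formula when $\gamma_2=1$, whereas your $g(B)=(\gamma_2\alpha_k+Q_k^t)B^2-\frac{\gamma_2R^t}{\phi^t}\log B$ gives exactly the claimed expression for every $\gamma_2>0$ and makes the needed condition $\phi^t>0$ explicit.
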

The detailed proof is provided in Appendix A.2 in the supplementary material.
Proposition \ref{the:2} uncovers that $(B_k^t, \varepsilon_k^t)^*$ increases with $R^t$, which means a greater payment by the cloud server appeals to edge nodes to risk higher privacy leakage in exchange for economic reward, while high unit cost of $\alpha_k, \beta_k$ and unstable virtual queue of $Q_k^t, Z_k^t$ have the opposite effect. 

\subsection{Optimal Strategy for Cloud Server}
In this section, we explore the optimal strategy $(R^t)^*$ for the cloud server given all edge nodes' strategy $\left\{(B_k^t, \varepsilon_k^t)^*\right\}_{k=1}^N$ at arbitrary round $t$. Based on the backward reduction, we substitute $\left\{(B_k^t, \varepsilon_k^t)^*\right\}_{k=1}^N$ into the cloud server's cost function in Equ. (\ref{game:cost}), and the optimal strategy $R^t$ for the cloud edge under given mean-field estimator $\phi^t$ is as follows:
\begin{proposition}
    \label{the:3}
    The optimal strategy $(R^t)^*$ for cloud server at arbitrary round $t$ is
    \begin{align}
        (R^t)^* = & \left( \frac{2 {\kappa_1}^{T-1-t}\kappa_3\eta^2C^2}{\gamma_1} \cdot \frac{\sum_{k=1}^N (Y_k^t)^{-1}}{\left(\sum_{k=1}^N (X_k^t)^{\frac{1}{2}}\right)^2}\right)^{\frac{1}{3}}, \label{method:cloud server answer} \\
        X_k^t = & \frac{\gamma_2}{2\phi^t(\gamma_2 \alpha_k + Q_k^t)},
        Y_k^t = \frac{\gamma_2}{2\phi^t(\gamma_2\beta_k + Z_k^t)}. \notag
    \end{align}
\end{proposition}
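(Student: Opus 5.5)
The plan is backward substitution: insert the Stage II best responses of Proposition \ref{the:2} into the cloud server's cost (\ref{game:cost}). This turns the cost into a one-dimensional function of $R^t$ for each round, which is then minimized by a first-order condition.

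\emph{Decoupling across rounds.} The cost (\ref{game:cost}) is a sum over $t$ of per-round terms. At round $t$ the quantities $\phi^t$, $Q_k^t$ and $Z_k^t$ are treated as given: $\phi^t$ is a known estimator, and the virtual queues are determined by past rounds in the online framework. Hence $B_k^t$ and $\varepsilon_k^t$ depend only on $R^t$, and minimizing (\ref{game:cost}) reduces to minimizing, for each $t$ separately,
\begin{equation*}
g_t(R^t) = \gamma_1 R^t + \sum_{k=1}^N \frac{K^t}{(B^t)^2(\varepsilon_k^t)^2}, \qquad K^t = {\kappa_1}^{T-1-t}\kappa_3\eta^2C^2 .
\end{equation*}
I would state explicitly that the coupling of $R^t$ to future rounds through the queue dynamics is ignored, consistent with the drift-plus-penalty treatment. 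This is the only conceptually delicate point.

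\emph{Substitution.} By Proposition \ref{the:2}, $(B_k^t)^* = \sqrt{X_k^t R^t}$ and $(\varepsilon_k^t)^* = \sqrt{Y_k^t R^t}$, with $X_k^t, Y_k^t$ as defined in the statement. The total data volume is therefore $B^t = \sum_k B_k^t = \sqrt{R^t}\sum_k (X_k^t)^{1/2}$. This gives
\begin{equation*}
(B^t)^2(\varepsilon_k^t)^2 = (R^t)^2\, Y_k^t \Big(\sum_{i=1}^N (X_i^t)^{1/2}\Big)^2 ,
\end{equation*}
and hence
\begin{equation*}
g_t(R) = \gamma_1 R + \frac{K^t S^t}{R^2}, \qquad S^t = \frac{\sum_{k} (Y_k^t)^{-1}}{\big(\sum_k (X_k^t)^{1/2}\big)^2}.
\end{equation*}

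\emph{Optimization.} On $R>0$ we have $g_t''(R) = 6K^tS^t/R^4 > 0$, since $K^t > 0$ (assuming $\kappa_1 > 0$) and $S^t > 0$. So $g_t$ is strictly convex and its unique stationary point is the global minimizer. Setting $g_t'(R) = \gamma_1 - 2K^tS^t/R^3 = 0$ gives $(R^t)^3 = 2K^tS^t/\gamma_1$, which is exactly (\ref{method:cloud server answer}). Finally, I would note that $g_t(R)\to\infty$ both as $R\to 0^+$ and as $R\to\infty$, which confirms that the minimizer is interior and positive.
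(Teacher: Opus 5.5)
Your proposal is correct and follows the paper's proof essentially step for step. Like the paper, you rewrite Proposition~\ref{the:2} as $(B_k^t)^*=\sqrt{R^tX_k^t}$ and $(\varepsilon_k^t)^*=\sqrt{R^tY_k^t}$, substitute into the per-round cost to get $\gamma_1R^t+K^tS^t/(R^t)^2$, and solve the first-order condition. Your strict-convexity check, boundary argument, and explicit caveat that the dependence of future queues (and of $\phi^t$) on $R^t$ is ignored are worthwhile additions that the paper leaves implicit.
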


The detailed proof is provided in Appendix A.3 in the supplementary material.
$X_k^t$ and $Y_k^t$ are defined as quality factors of edge node $k$ at round $t$ in term of two resource constraints. We say edge node $k$ is of high quality if it features a lower unit cost $\alpha_k, \beta_k$ and more stable virtual queues $X_k^t, Y_k^t$ at round $t$. According to Proposition \ref{the:3}, the cloud server has to afford more monetary payment for low-quality edge nodes to guarantee full participation of them, which is consistent with our intuition.

\subsection{Algorithm for Finalizing Strategy Design}
\label{met:3}
In this section, we explore finding the precise value of the mean-field estimator $\phi(t)$, thereby finalizing strategy design for the Stackelberg Game. On the one hand, $\phi^t$ defined as $\sum_{k=1}^N \log(B_k^t \varepsilon_k^t)$ is affected by $\left\{(B_k^t, \varepsilon_k^t)\right\}_{k=1}^N$; on the other hand, $\phi^t$ will in turn affect the determination of $\left\{(B_k^t, \varepsilon_k^t)\right\}_{k=1}^N$ according to Proposition \ref{the:2}. There is a closed-loop among $\phi^t$ and $\left\{(B_k^t, \varepsilon_k^t)\right\}_{k=1}^N$. Based on this, we have the following proposition:
\begin{proposition}
    \label{the:4}
    There exists a fixed point for the mean-field estimator $\{\phi^t\}_{t=0}^{T-1}$.
\end{proposition}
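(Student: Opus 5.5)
The plan is to turn the closed loop between $\phi^t$ and $\{(B_k^t,\varepsilon_k^t)\}_{k=1}^N$ into a scalar equation for each round $t$, and then show that equation has a root. I proceed round by round. At round $t$ the virtual queues $Q_k^t, Z_k^t$ are determined by the strategies chosen in rounds $0,\dots,t-1$ through (\ref{method:queue1})--(\ref{method:queue2}). Hence, once $\phi^0,\dots,\phi^{t-1}$ are fixed, they are known constants. It therefore suffices, by induction on $t$, to show that the one-dimensional map
\begin{equation*}
\Gamma^t(\phi) = \sum_{k=1}^N \log\big((B_k^t)^*(\phi)\,(\varepsilon_k^t)^*(\phi)\big)
\end{equation*}
has a fixed point $\phi^t>0$. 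Here $(B_k^t)^*,(\varepsilon_k^t)^*$ are given by Proposition \ref{the:2}, evaluated at $R^t=(R^t)^*(\phi)$ from Proposition \ref{the:3}. Collecting the fixed points over $t$ then gives the fixed point $\{\phi^t\}_{t=0}^{T-1}$.

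The key step is to compute the dependence of $\Gamma^t$ on $\phi$ explicitly.
\begin{itemize}
\item Write $X_k^t=\tilde X_k^t/\phi$ and $Y_k^t=\tilde Y_k^t/\phi$, where $\tilde X_k^t=\gamma_2/(2(\gamma_2\alpha_k+Q_k^t))$ and $\tilde Y_k^t=\gamma_2/(2(\gamma_2\beta_k+Z_k^t))$ do not depend on $\phi$.
\item Then $\sum_k (Y_k^t)^{-1}=\phi\sum_k(\tilde Y_k^t)^{-1}$ and $\big(\sum_k (X_k^t)^{1/2}\big)^2=\phi^{-1}\big(\sum_k(\tilde X_k^t)^{1/2}\big)^2$.
\item Substituting into Proposition \ref{the:3} gives $(R^t)^*=c^t\phi^{2/3}$, with $c^t>0$ independent of $\phi$.
\item Proposition \ref{the:2} gives $(B_k^t)^*(\varepsilon_k^t)^*=R^t\sqrt{\tilde X_k^t\tilde Y_k^t}/\phi$, which becomes $c^t\sqrt{\tilde X_k^t\tilde Y_k^t}\,\phi^{-1/3}$.
\end{itemize}
Hence
\begin{equation*}
\Gamma^t(\phi)=A^t-\frac{N}{3}\log\phi,\qquad A^t=\sum_{k=1}^N\log\Big(c^t\sqrt{\tilde X_k^t\tilde Y_k^t}\Big),
\end{equation*}
where $A^t$ is a finite constant because all queue values and cost parameters are finite and positive.

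Existence then follows from the intermediate value theorem applied to $g(\phi)=\phi-\Gamma^t(\phi)$ on $(0,\infty)$. The function $g$ is continuous and strictly increasing, since $\phi$ increases and $-\Gamma^t$ increases in $\phi$. As $\phi\to0^+$ we have $\Gamma^t(\phi)\to+\infty$, so $g\to-\infty$. As $\phi\to\infty$ we have $g\to+\infty$. So there is a unique $\phi^t>0$ with $g(\phi^t)=0$. At this $\phi^t$ the payments, strategies, and updated queues $Q_k^{t+1},Z_k^{t+1}$ are well defined, which closes the induction.

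The main obstacle is making the reduction to a scalar problem legitimate. Two points need care.
\begin{itemize}
\item The positivity restriction $\phi>0$ is needed for Proposition \ref{the:2} to be meaningful, and $\log(B_k^t\varepsilon_k^t)$ may be negative, which relates to the $\max\{0,\cdot\}$ in $P_k(t)$. I would handle this by restricting to $\phi\in(0,\infty)$; the intermediate value argument above produces a root there automatically.
\item The cross-round coupling must respect causality: the queues at round $t$ depend only on earlier rounds. I would verify this directly from (\ref{method:queue1})--(\ref{method:queue2}).
\end{itemize}
If the explicit homogeneity computation failed (e.g.\ under a different cost form), I would fall back on Brouwer's fixed point theorem: bound $\Gamma^t$ on a compact interval $[\underline\phi,\overline\phi]$ that it maps into itself, using continuity of the closed-form strategies in $\phi$.
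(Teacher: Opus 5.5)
Your argument is correct, but it takes a different and more explicit route than the paper. The paper never computes the closed loop. It composes the substitutions abstractly ($\phi^t=\Psi_1(\cdot)$, then $\Psi_2(\phi^t,R^t)$ via Proposition~\ref{the:2}, then $\Psi_4(\phi^t)$ via Proposition~\ref{the:3}). It then asserts that $\Psi_4$ is a continuous self-map of $\Pi=[0,C]$, taking the lower end from the extra assumption $B_k^t\varepsilon_k^t\ge 1$ and the upper end from the resource budgets, and invokes Brouwer.

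You instead use that $X_k^t,Y_k^t$ scale like $1/\phi$ to obtain $(R^t)^*=c^t\phi^{2/3}$ and $\Gamma^t(\phi)=A^t-\frac{N}{3}\log\phi$. You then apply the intermediate value theorem to the strictly increasing $g(\phi)=\phi-\Gamma^t(\phi)$ on $(0,\infty)$. This buys you:
\begin{itemize}
\item uniqueness as well as existence;
\item no need for $B_k^t\varepsilon_k^t\ge1$ or an a priori upper bound;
\item staying in the region $\phi>0$ where Proposition~\ref{the:2} is defined.
\end{itemize}

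Your formula in fact shows $\Psi_4(\phi)\to+\infty$ as $\phi\to0^+$. So $\Psi_4$ is not a continuous self-map of $[0,C]$ for any $C$, and the paper's Brouwer step does not go through as written. For the same reason, your Brouwer fallback would require actually exhibiting an invariant interval $[\underline\phi,\overline\phi]$ with $\underline\phi>0$. That is not automatic for a decreasing map with a logarithmic singularity, so your primary IVT argument is the one to keep. What the paper's route could buy in principle is independence from the power-law structure of Propositions~\ref{the:2} and~\ref{the:3}, but only once continuity and invariance are actually verified.

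Two small remarks:
\begin{itemize}
\item $c^t>0$ tacitly uses $\kappa_1>0$. The paper also assumes this implicitly, since otherwise $(R^t)^*$ is not a positive payment.
\item Your explicit form gives $|(\Gamma^t)'(\phi^t)|=N/(3\phi^t)$. So the fixed point is locally attracting for the plain iteration in Algorithm~\ref{alg:1} when $\phi^t>N/3$ and repelling when $\phi^t<N/3$. This is a useful by-product that the abstract argument cannot provide.
\end{itemize}
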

The detailed proof is provided in Appendix A.4 in the supplementary material [2].
Based on Proposition \ref{the:4}, we develop a fixed-point approach to determine ${\phi^t}$, which will be introduced later in Section \ref{sect:alg}.

In summary, the Nash Equilibrium for Equ. (\ref{game:stackel}) is
\begin{align}
    \label{method:equilibrium answer}
    \text{Stage  I}: & (R^t)^*, \notag \\
    \text{Stage  II}: & (B_k^t, \varepsilon_k^t)^*.
\end{align}

\subsection{Complete Workflow for FEDBUD Mechanism}
\label{sect:alg}
The complete algorithm of FEDBUD is summarized in Algorithm \ref{alg:1}. Take round $t$ for instance:
\begin{enumerate}
    \item \textit{Strategy Decision Phase}: the system initializes mean-field estimator $\phi_0^t$. Amid the $i$-th fixed-point iteration, given $\phi_i^t$, the cloud server optimizes strategy $R_i^t$ to minimize its cost function before edge nodes optimize strategy $(B_{k, i}^t, \varepsilon_{k, i}^t)$ to maximize their utility function, which is followed by the update of mean-field estimator $\phi_{i+1}^t$. Iterations will come to the end until convergence, when $\phi^t, R^t$ and $\{(B_k^t, \varepsilon_k^t)\}_{k=1}^N$ are fixed synchronously.
    \item \textit{Federated Training Phase}: the cloud server distributes global model $w^t$ with optimal payment $R^t$ to edge nodes. After that, edge node $k$ conducts local training with optimal data volume $B_k^t$ and injects noise according to optimal privacy budget $\varepsilon_k^t$. In addition, edge node $k$ updates virtual queue of $Q_k^{t+1}$ and $Z_k^{t+1}$ for next round's use.
\end{enumerate}
After $T$ rounds of federated training, Algorithm \ref{alg:1} returns global model $\mathbf w^T$.

\begin{algorithm}[H]
\caption{FEDBUD Mechanism}
\label{alg:1}
\begin{algorithmic}[1]
\STATE \textbf{Input:} number of rounds $T$, number of clients $N$.
\STATE \textbf{Output:} global model $\mathbf w^{T}$.
\STATE \textbf{Initialize:} global model $w^0$, virtual queues $\{Q_k^1\}_{k=1}^N$ and $\{Z_k^1\}_{k=1}^N$, other hyperparameters.
\FOR{$t=0$ to $T-1$}
    \STATE \underline{\textbf{Strategy Decision Phase}}
    \STATE \textbf{Initialize:} mean-field estimator $\phi_0^t$, iteration counter $i = 0$.
    \REPEAT
        \STATE Cloud server computes the optimal $R_i^t$ based on $\phi_i^{t}$ according to Proposition \ref{the:3}.
        \FOR{edge node $k=1$ to $N$}
            \STATE Compute $(B_{k,i}^t,\varepsilon_{k,i}^t)$ according to Proposition \ref{the:2}.
        \ENDFOR
        \STATE Update estimator $\phi_{i+1}^{t}\gets \sum_{k=1}^{N}\log(B_k^t\varepsilon_k^t)$.
        \STATE $i\gets i+1$.
    \UNTIL{$|\phi_i^{t}-\phi_{i-1}^{t}|\le \epsilon$}.
    \STATE Set $(\phi^{t}, R^t, B_k^t,\varepsilon_k^t) \gets (\phi_i^t, R_i^t, B_{i,k}^t, \varepsilon_{i,k}^t)$.

    \STATE \underline{\textbf{Federated Training Phase}}
    \STATE Cloud server broadcasts $(\mathbf w^t, R^t)$ to all edge nodes. 
    \FOR{edge node $k=1$ to $N$}
        \STATE Perform local training and noise injecting using $(B_k^t,\varepsilon_k^t)$ according to Equ. (\ref{formu:local}). 
        \STATE Upload local model $\mathbf w_k^{t+1}$ to server.
        \STATE Update virtual queue $Q_k^{t+1}$ according to Equ. (\ref{method:queue1}).
        \STATE Update virtual queue $Z_k^{t+1}$ according to Equ. (\ref{method:queue2}).
    \ENDFOR
    \STATE Cloud server aggregates model according to Equ. (\ref{formu:aggre}).
\ENDFOR
\end{algorithmic}
\end{algorithm}

\section{Experiments}
In this section, we evaluate the performance of our proposed FEDBUD by numerical experiments.

\begin{figure*}[htbp]
  \centering
  \includegraphics[width=\linewidth]{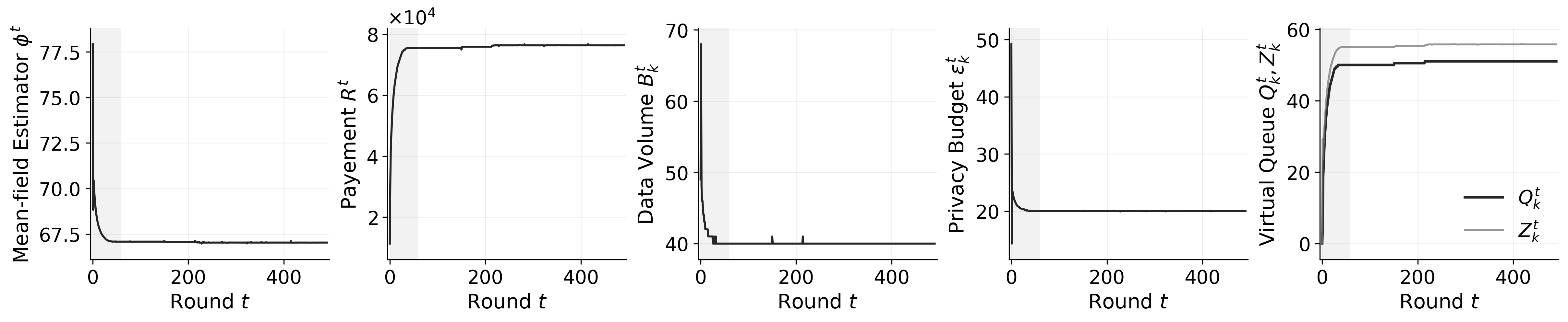}
  \caption{Illustration of movement trajectory for mean-field estimator $\phi^t$, the cloud server’s strategy $R^t$, edge node $k$' strategy $(B_k^t, \varepsilon_k^t)$ and virtual queues $Q_k^t, Z_k^t$ over the time horizon.}
  \vspace{-1em}
  \label{fig:1}
\end{figure*}

\begin{figure}[htbp]
    \centering
    \begin{subfigure}[htbp]{0.49\linewidth}
        \centering
        \includegraphics[width=\linewidth]{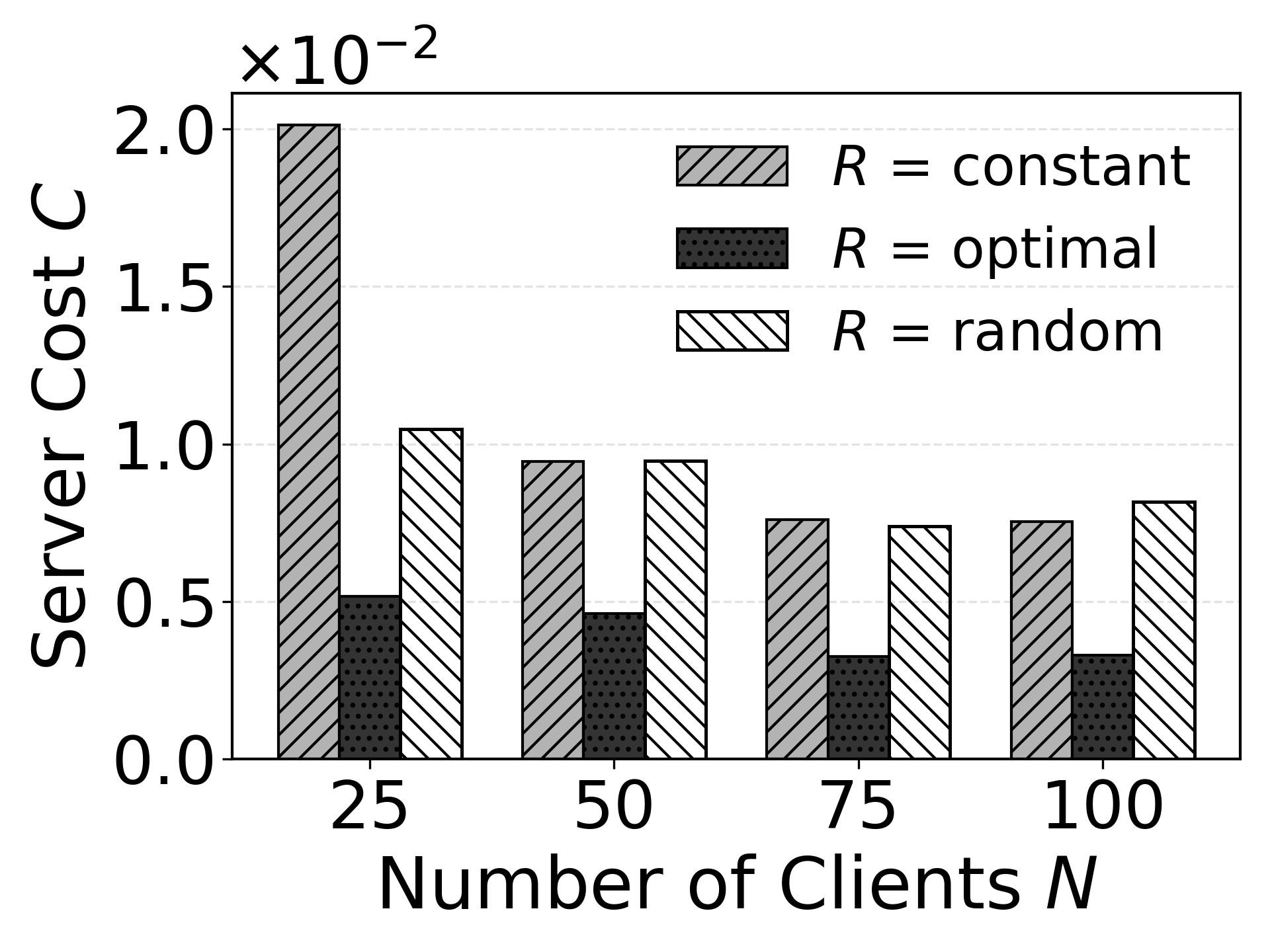}
    \end{subfigure}\hfill
    \begin{subfigure}[htbp]{0.49\linewidth}
        \centering
        \includegraphics[width=\linewidth]{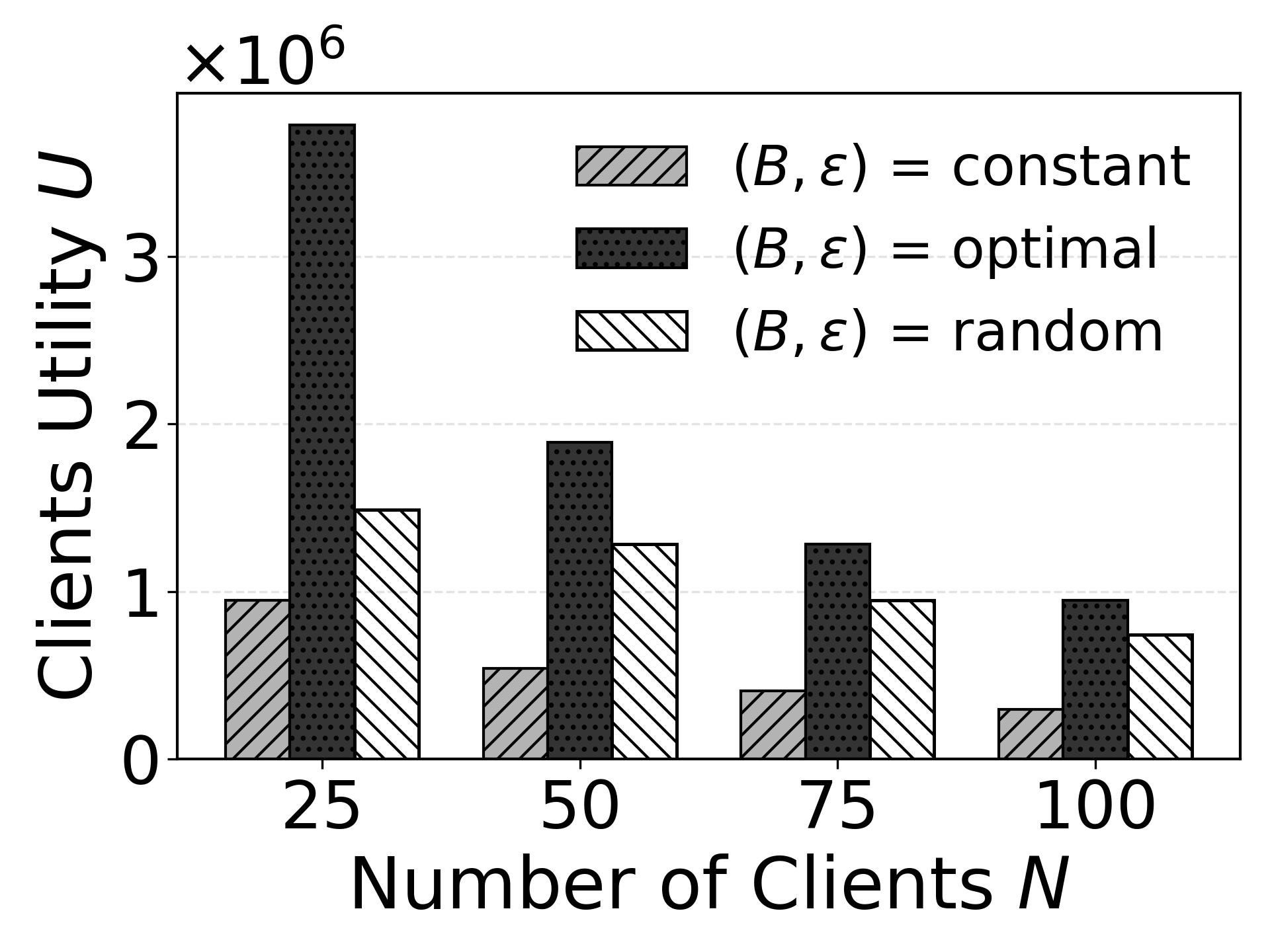}
    \end{subfigure}\hfill
    \caption{Comparison of cloud server's cost $C$ (\textbf{left}) and edge node $k$'s utility $U_k$ (\textbf{right}) over different strategies.}
    \vspace{-1.5em}
    \label{fig:2}
\end{figure}

\begin{table}[t]
\centering
\caption{Impact of weight factor $\gamma_1$ on the cloud server's objective trade-off.}
\label{tab:gamma}
\setlength{\tabcolsep}{5.5pt}
\renewcommand{\arraystretch}{1.08}
\begin{tabular}{c | c c}
\toprule
\makecell{Factor $\gamma_1$} & \makecell{Payment $\sum R^t$} & \makecell{Model Loss $F(\mathbf w^T) - F(\mathbf w^*)$} \\
\midrule
$\phantom{0}1\times10^{-11}$ &  $398.838 \times 10^6$ & $1.337 \times 10^{-3}$\\
$\phantom{0}5\times10^{-11}$ &  $\phantom{0}78.368 \times 10^6$ & $1.347 \times 10^{-3}$\\
$10 \times10^{-11}$ &  $\phantom{0}38.174 \times 10^6$ & $1.354 \times 10^{-3}$\\
\bottomrule
\end{tabular}
\end{table}





\subsection{Settings}
In our experiments, we arrange federated tasks on the widely used benchmark of CIFAR-10. We set $T = 100$ communication rounds, and $N = 100$ edge nodes participate. Each edge node conducts local update using Stochastic Gradient Descent (SGD) with a learning rate $\eta = 10^{-3}$ for 10 epochs. The unit cost for computation resource and privacy risk obeys $\alpha_k\sim\mathcal{U}(1 \times 10^{-2}, 5 \times 10^{-2}), \beta_k\sim \mathcal{U}(1\times 10^{-2}, 5\times 10^{-2})$. The weight factor is set as $\gamma_1 = 1 \times 10^{-10}, \gamma_2 = 1$. All experiments are implemented in PyTorch and conducted on a workstation equipped with an NVIDIA GPU. To accelerate training, multiple clients are executed in parallel using multiprocessing.

\subsection{Performance Evaluation}
We illustrate the performance evaluation of FEDBUD in this section. 

\textbf{Fixed-point Convergence Process: } Fig. \ref{fig:1} shows the movement trajectory of mean-field estimator $\phi^t$, the cloud server's strategy $R^t$, edge node $k$' strategy $(R_k^t, \varepsilon_k^t)$ and virtual queues $Q_k^t, Z_k^t$ over the time horizon. In the early stage, the system is in an unstable condition where both the cloud server and edge nodes are exploring their optimal strategies. After only 50-60 rounds, both sides fix their best or near-best $((R^t)^* \rightarrow 7.63 \times 10^4, (B_k^t, \varepsilon_k^t)^* \rightarrow (40.1, 20.7))$, which means the system converges to a relatively stable state for model training. The result shows the viability and efficiency of Algorithm \ref{alg:1} in solving the complex problem of Equ. (\ref{game:stackel}). In addition, we can find virtual queue $Q_k^t$ and $Z_k^t$ keep within a limited upper bound of $50 - 60$ throughout the task. Despite sight resource infringement, the general stability of virtual queues demonstrates that strategies derived by Algorithm \ref{alg:1} obey resource constraints literally.

\textbf{Verification of Derived Strategy as Nash Equilibrium: } In this paragraph, we verify the strategy of Equ. (\ref{method:equilibrium answer}) as Nash Equilibrium. For comparison, we set two auxiliary strategies: 1) \textit{Constant strategy}. It means the cloud server or edge nodes take static actions over the time horizon, with the value not equal to the converged results in Fig. \ref{fig:1} $(R^{constant} \neq 7.63 \times 10^4; (B_k, \varepsilon_k)^{constant} \neq (40.1, 20.7))$. 2) \textit{Random strategy}. It refers to taking random actions over the task. Considering fairness, the slot-average value under this strategies is set to keep in line with that of converged results in Fig. \ref{fig:1} (randomized $R^{random}$ with $\frac{1}{T} \sum_{t=0}^{T-1} R^{random} = 7.63 \times 10^4$; randomized $(B_k, \varepsilon_k)^{random}$ with $\frac{1}{T} \sum_{t=0}^{T-1} B_k^{random} = 40.1, \frac{1}{T}\sum_{t=0}^{T-1} \varepsilon_k^{random} = 20.7$).
As plotted in Fig. \ref{fig:2}, compared with other baselines, $(R^t)^*$ helps the cloud server obtain the lowest cost, while $(B_k^t, \varepsilon_k^t)^*$ helps edge nodes secure the highest utility. Provided that both the cloud server and edge nodes are selfish, the above results mean they will strictly obey the optimal strategies derived by Algorithm \ref{alg:1} rather than others, thereby the mutually optimal strategies are reached simultaneously, and the Nash Equilibrium holds.
In addition, we can find that both the cloud server's cost and edge nodes' objectives decrease marginally with the number of edge nodes $N$. For edge nodes, number expansion intensifies competition for payment, further leading to allocated payment reduction and utility reduction. For the cloud server, despite incurring more payment, numerous edge nodes help improve model performance in return, which reduces the overall cost.

\textbf{Impact of Weight Factor on Strategy: } In this paragraph, we explore the impact of weight factor $\gamma_1$ on the cloud server's strategy. For comparison, we set a range of $\gamma_1$ (from 1 to 10, $\times 10^{-11}$) and the results are plotted in Table \ref{tab:gamma}. It shows training loss $F(\mathbf w^T) - F(\mathbf w^*)$ increases (from 1.337 to 1.354, $\times 10^{-3}$) with $\gamma_1$ while the accumulated monetary payment $\sum_{t=1}^T R^t$ decreases (from 398.838 to 38.174, $\times 10^6$) with it. The underlying reason is that the cloud server takes priority to economic expenditure rather than model performance with the growth of $\gamma_1$. In addition, the payment-loss relationship is drastically nonlinear. Once model performance approaches a near-saturated state, further reducing $\gamma_1$ will result in tremendous payment in exchange for redundant contribution to model performance enhancement. Therefore, it is an important step to determine $\gamma_1$ according to real-world demand in Algorithm \ref{alg:1}.

\section{Conclusion}
In this paper, we propose a novel federated learning system called FEDBUD, which combines privacy and economic concerns together by considering the joint influence of data volume and noise level on incentive strategy determination. To determine the optimal strategies for both sides, we model FEDBUD as a two-stage Stackelberg Game and derive the Nash Equilibrium. 
Extensive experiments demonstrate the superiority of our proposed approach.


\bibliography{root}

@article{nasajpour2025federated,
  title={Federated Learning in Smart Healthcare: A Survey of Applications, Challenges, and Future Directions},
  author={Nasajpour, Mohammad and Pouriyeh, Seyedamin and Parizi, Reza M and Han, Meng and Mosaiyebzadeh, Fatemeh and Liu, Liyuan and Xie, Yixin and Batista, Daniel Mac{\^e}do},
  journal={Electronics},
  volume={14},
  number={9},
  pages={1750},
  year={2025},
  publisher={MDPI}
}

@article{gandhi2025federated,
  title={Federated Learning in Secure Smart City Sensing: Challenges and Opportunities},
  author={Gandhi, Monika and Singh, Sushil Kumar and Ravikumar, RN and Vaghela, Krunal},
  journal={Edge of Intelligence: Exploring the Frontiers of AI at the Edge},
  pages={215--251},
  year={2025},
  publisher={Wiley Online Library}
}

@inproceedings{yang2015bayesian,
  title={Bayesian differential privacy on correlated data},
  author={Yang, Bin and Sato, Issei and Nakagawa, Hiroshi},
  booktitle={Proceedings of the 2015 ACM SIGMOD international conference on Management of Data},
  pages={747--762},
  year={2015}
}

@inproceedings{triastcyn2020bayesian,
  title={Bayesian differential privacy for machine learning},
  author={Triastcyn, Aleksei and Faltings, Boi},
  booktitle={International Conference on Machine Learning},
  pages={9583--9592},
  year={2020},
  organization={PMLR}
}

@article{zhang2023age,
  title={Age-dependent differential privacy},
  author={Zhang, Meng and Wei, Ermin and Berry, Randall and Huang, Jianwei},
  journal={IEEE Transactions on Information Theory},
  volume={70},
  number={2},
  pages={1300--1319},
  year={2023},
  publisher={IEEE}
}

@inproceedings{lin2024age,
  title={Age aware scheduling for differentially-private federated learning},
  author={Lin, Kuan-Yu and Lin, Hsuan-Yin and Hsu, Yu-Pin and Huang, Yu-Chih},
  booktitle={2024 IEEE International Symposium on Information Theory (ISIT)},
  pages={398--403},
  year={2024},
  organization={IEEE}
}

@article{huang2024collaboration,
  title={Collaboration in federated learning with differential privacy: A stackelberg game analysis},
  author={Huang, Guangjing and Wu, Qiong and Sun, Peng and Ma, Qian and Chen, Xu},
  journal={IEEE Transactions on Parallel and Distributed Systems},
  volume={35},
  number={3},
  pages={455--469},
  year={2024},
  publisher={IEEE}
}

@article{tang2025game,
  title={Game-Theoretic Incentive Mechanism for Blockchain-Based Federated Learning},
  author={Tang, Wenzheng and Liu, Erwu and Ni, Wei and Qu, Xinyu and Huang, Butian and Li, Kezhi and Niyato, Dusit and Jamalipour, Abbas},
  journal={IEEE Transactions on Mobile Computing},
  year={2025},
  publisher={IEEE}
}

@inproceedings{chen2025dualgfl,
  title={DualGFL: Federated Learning with a Dual-Level Coalition-Auction Game},
  author={Chen, Xiaobing and Zhou, Xiangwei and Zhang, Songyang and Sun, Mingxuan},
  booktitle={Proceedings of the AAAI Conference on Artificial Intelligence},
  volume={39},
  number={15},
  pages={15904--15912},
  year={2025}
}

@inproceedings{tang2025reputation,
  title={Reputation-aware Revenue Allocation for Auction-based Federated Learning},
  author={Tang, Xiaoli and Yu, Han},
  booktitle={Proceedings of the AAAI Conference on Artificial Intelligence},
  volume={39},
  number={19},
  pages={20832--20840},
  year={2025}
}

@article{xie2025privacy,
  title={A Privacy-Preserving Incentive Scheme for UAV-Aided Federated Learning: A Contract Method With Prospect Theory},
  author={Xie, Liang and Su, Zhou and Wang, Yuntao and Chen, Nan and Liu, Yiliang and Wang, Rui and Liu, Xin and Liu, Donglan and Zhang, Hao},
  journal={IEEE Transactions on Dependable and Secure Computing},
  year={2025},
  publisher={IEEE}
}

@inproceedings{wang2023federated,
  title={Federated learning with flexible control},
  author={Wang, Shiqiang and Perazzone, Jake and Ji, Mingyue and Chan, Kevin S},
  booktitle={IEEE INFOCOM 2023-IEEE Conference on Computer Communications},
  pages={1--10},
  year={2023},
  organization={IEEE}
}

@article{luo2024adaptive,
  title={Adaptive heterogeneous client sampling for federated learning over wireless networks},
  author={Luo, Bing and Xiao, Wenli and Wang, Shiqiang and Huang, Jianwei and Tassiulas, Leandros},
  journal={IEEE Transactions on Mobile Computing},
  volume={23},
  number={10},
  pages={9663--9677},
  year={2024},
  publisher={IEEE}
}

@article{zhan2019free,
  title={Free market of multi-leader multi-follower mobile crowdsensing: An incentive mechanism design by deep reinforcement learning},
  author={Zhan, Yufeng and Liu, Chi Harold and Zhao, Yinuo and Zhang, Jiang and Tang, Jian},
  journal={IEEE Transactions on Mobile Computing},
  volume={19},
  number={10},
  pages={2316--2329},
  year={2019},
  publisher={IEEE}
}

@article{nie2020multi,
  title={A multi-leader multi-follower game-based analysis for incentive mechanisms in socially-aware mobile crowdsensing},
  author={Nie, Jiangtian and Luo, Jun and Xiong, Zehui and Niyato, Dusit and Wang, Ping and Poor, H Vincent},
  journal={IEEE Transactions on Wireless Communications},
  volume={20},
  number={3},
  pages={1457--1471},
  year={2020},
  publisher={IEEE}
}
\bibliographystyle{IEEEtran}

\newpage
\appendices

In the appendix, the complete proofs of theoretic results provided in the main text are exhibited in detail.

\section{Proof of Theorem \ref{the:1}}

\begin{proof}
    Let us pay attention to round $t + 1$. According to Assumption \ref{ass:2}, we have
    \begin{align}
        \label{R:1}
        & E\left[F(\mathbf w^{t+1}) - F(\mathbf w^t)\right] \notag \\
        \leq & \underbrace{E\left\langle \nabla F(\mathbf w^t), \mathbf w^{t+1} - \mathbf w^t\right\rangle \vphantom{\frac{\rho}{2}}}_A + \underbrace{\frac{\rho}{2} E\left\Vert \mathbf w^{t+1} - \mathbf w^t \right\Vert_2^2}_B.
    \end{align}
    First, we focus on bounding $A$:
    \begin{align}
        & E\left\langle \nabla F(\mathbf w^t), \mathbf w^{t+1} - \mathbf w^t\right\rangle \notag \\
        = & E\left\langle \nabla F(\mathbf w^t), -\eta \nabla F(\mathbf w^t) + \mathbf n^t \right\rangle \notag \\
        = & E\left\langle \nabla F(\mathbf w^t), -\eta \nabla F(\mathbf w^t) \right\rangle + E\left\langle \nabla F(\mathbf w^t), \mathbf n^t \right\rangle \notag \\
        = & (-\eta) E\left\Vert \nabla F(\mathbf w^t)\right\Vert_2^2.
    \end{align}
    The third step holds due to zero-mean noise in DP.
    Then, we focus on bounding $B$:
    \begin{align}
        & \frac{\rho}{2} E\left\Vert \mathbf w^{t+1} - \mathbf w^t \right\Vert_2^2 \notag \\
        = &\frac{\rho}{2} E\left\Vert \sum_{k=1}^N \frac{B_k^t}{B^t} (-\eta \nabla F_k(\mathbf w^t; \mathcal{B}_k^t) + \mathbf n_k^t)\right\Vert_2^2 \notag \\
        = &\frac{\rho}{2} E\left\Vert \sum_{k=1}^N\frac{B_k^t}{B^t} (-\eta\nabla F_k(\mathbf w^t; \mathcal{B}_k^t)) \right\Vert_2^2 + \frac{\rho}{2} E\left\Vert \sum_{k=1}^N \frac{B_k^t}{B^t} \mathbf n_k^t \right\Vert_2^2 \notag \\
        & - \rho E\left\langle \sum_{k=1}^N\frac{B_k^t}{B^t}(-\eta\nabla F_k(\mathbf w^t); \mathcal{B}_k^t)), \sum_{k=1}^N \frac{B_k^t}{B^t} \mathbf n_k^t \right\rangle \notag \\
        = &\frac{\rho}{2} E\left\Vert \sum_{k=1}^N\frac{B_k^t}{B^t} (-\eta\nabla F_k(\mathbf w^t; \mathcal{B}_k^t))) \right\Vert_2^2 + \frac{\rho}{2} E\left\Vert \sum_{k=1}^N \frac{B_k^t}{B^t} \mathbf n_k^t \right\Vert_2^2 \notag \\
        = &\underbrace{\frac{\rho\eta^2}{2} E\left\Vert \sum_{k=1}^N\frac{B_k^t}{B^t} \nabla F_k(\mathbf w^t; \mathcal{B}_k^t)) \right\Vert_2^2}_{B_1} + \underbrace{\frac{\rho}{2} E\left\Vert \sum_{k=1}^N \frac{B_k^t}{B^t} \mathbf n_k^t \right\Vert_2^2}_{B_2}.
    \end{align}
    According to Assumption \ref{ass:3}, $B_1$ is bounded by
    {\footnotesize
    \begin{align}
        & \frac{\rho\eta^2}{2} E\left\Vert \sum_{k=1}^N\frac{B_k^t}{B^t} \nabla F_k(\mathbf w^t; \mathcal{B}_k^t) \right\Vert_2^2 \notag \\
        \leq & \frac{\rho\eta^2}{2} \sum_{k=1}^N \frac{B_k^t}{B^t} E\left\Vert \nabla F_k(\mathbf w^t; \mathcal{B}_k^t)\right\Vert_2^2 \notag \\
        = & \frac{\rho\eta^2}{2} \sum_{k=1}^N \frac{B_k^t}{B^t} E\left\Vert \nabla F_k(\mathbf w^t; \mathcal{B}_k^t) - F(\mathbf w^t) + F(\mathbf w^t)) \right\Vert_2^2 \notag \\
        \leq & \frac{\rho\eta^2}{2} \sum_{k=1}^N \frac{B_k^t}{B^t} \left( 2 E\left\Vert \nabla F_k(\mathbf w^t; \mathcal{B}_k^t) - F(\mathbf w^t)\Vert_2^2 + 2E\Vert F(\mathbf w^t))\right\Vert_2^2 \right) \notag \\
        \leq & \rho \eta^2 E\left\Vert F(\mathbf w^t) \right\Vert_2^2 + \rho\eta^2\sum_{k=1}^N \frac{B_k^t}{B^t} \lambda_k^t.
    \end{align}
    }
    According to Assumption \ref{ass:4}, $B_2$ is bounded by
    {\footnotesize
    \begin{align}
        & \frac{\rho}{2} E\left\Vert \sum_{k=1}^N \frac{B_k^t}{B^t} \mathbf n_k^t\right\Vert_2^2 
        \leq \frac{\rho}{2}\sum_{k=1}^N \frac{B_k^t}{B^t} E\left\Vert \mathbf n_k^t\right\Vert_2^2
        = \frac{\rho d}{2} \sum_{k=1}^N \left(\frac{B_k^t}{B^t}\right)^2 {\sigma_k^t}^2 \notag \\
        = & \frac{\rho d}{2} \sum_{k=1}^N \left(\frac{B_k^t}{B^t}\right)^2 \left( \frac{\eta C}{B_k^t \varepsilon_k^t}\right)^2
        = \frac{\rho d}{2} \sum_{k=1}^N \frac{\eta^2 C^2}{{B^t}^2 {\varepsilon_k^t}^2}.
    \end{align}
    }

    Combining $A, B_1$ and $B_2$, we have
    \begin{align}
        \label{R:2}
        & E\left[F(\mathbf w^{t+1}) - F(\mathbf w^t)\right] \notag \\
        \leq &\underbrace{(\rho \eta^2 - \eta) E\left\Vert F(\mathbf w^t) \right\Vert_2^2}_{C}
        + \rho\eta^2\sum_{k=1}^N \frac{B_k^t}{B^t} \lambda_k^t + \frac{\rho d}{2} \sum_{k=1}^N \frac{\eta^2 C^2}{{B^t}^2 {\varepsilon_k^t}^2}.
    \end{align}

    Now we bound $C$. We set $\eta < \frac{1}{\rho}$, then $\rho\eta^2 - \eta < 0$. According to Assumption \ref{ass:2}, the following inequality holds:
    \begin{align}
        \label{C:2}
        & 2\mu (\rho \eta^2 - \eta) E[F(\mathbf w^t) - F(\mathbf w^*)]
        \geq (\rho \eta^2 - \eta) E\left\Vert \Delta F(\mathbf w^t) \right\Vert_2^2.
    \end{align}

    Substituting Equ. (\ref{C:2}) into Equ. (\ref{R:2}), we have
    \begin{align}
        \label{R:3}
        & E\left[F(\mathbf w^{t+1}) - F(\mathbf w^t)\right] \notag \\
        \leq & 2\mu (\rho \eta^2 - \eta) E[F(\mathbf w^t) - F(\mathbf w^*)] \notag \\
        &+ \rho\eta^2\sum_{k=1}^N \frac{B_k^t}{B^t} \lambda_k^t + \frac{\rho d}{2} \sum_{k=1}^N \frac{\eta^2 C^2}{{B^t}^2 {\varepsilon_k^t}^2}.
    \end{align}

    Adding $E[F(\mathbf w^{t-1}) - F(\mathbf w^*)]$ on both sides on Equ. (\ref{R:3}), we have
    \begin{align}
        \label{R:4}
        & E\left[F(\mathbf w^{t+1}) - F(\mathbf w^*)\right] \notag \\
        \leq & (1 + 2\mu\rho \eta^2 - 2\mu\eta) E[F(\mathbf w^t) - F(\mathbf w^*)] \notag \\
        &+ \rho\eta^2\sum_{k=1}^N \frac{B_k^t}{B^t} \lambda_k^t + \frac{\rho d}{2} \sum_{k=1}^N \frac{\eta^2 C^2}{{B^t}^2 {\varepsilon_k^t}^2}.
    \end{align}

    Recursively using Equ. (\ref{R:4}), we have
    \begin{align}
        \label{R:5}
        & E\left[F(\mathbf w^{t+1}) - F(\mathbf w^*)\right] \notag \\
        \leq & (1 + 2\mu\rho \eta^2 - 2\mu\eta) E[F(\mathbf w^t) - F(\mathbf w^*)] \notag \\
        &+ \rho\eta^2\sum_{k=1}^N \frac{B_k^t}{B^t} \lambda_k^t + \frac{\rho d}{2} \sum_{k=1}^N \frac{\eta^2 C^2}{{B^t}^2 {\varepsilon_k^t}^2} \notag \\
        \leq & (1 + 2\mu\rho \eta^2 - 2\mu\eta)^2 E[F(\mathbf w^{t-1}) - F(\mathbf w^*)] \notag \\
        &+ (1 + 2\mu\rho \eta^2 - 2\mu\eta) \times \notag \\
        &\left(\rho\eta^2\sum_{k=1}^N \frac{B_k^{t-1}}{B^{t-1}} \lambda_k^{t-1} + \frac{\rho d}{2} \sum_{k=1}^N \frac{\eta^2 C^2}{{B^{t-1}}^2 {\varepsilon_k^{t-1}}^2}\right) \notag \\
        &+\left(\rho\eta^2\sum_{k=1}^N \frac{B_k^t}{B^t} \lambda_k^t + \frac{\rho d}{2} \sum_{k=1}^N \frac{\eta^2 C^2}{{B^t}^2 {\varepsilon_k^t}^2}\right) \notag \\
        \leq & \cdots \notag \\
        \leq & (1 + 2\mu\rho \eta^2 - 2\mu\eta)^{t+1} E[F(\mathbf w^0) - F(\mathbf w^*)] \notag \\
        &+ \sum_{r=0}^t (1 + 2\mu\rho \eta^2 - 2\mu\eta)^r \times \notag \\
        & \left(\rho\eta^2\sum_{k=1}^N \frac{B_k^{t-r}}{B^{t-r}} \lambda_k^{t-r} + \frac{\rho d}{2} \sum_{k=1}^N \frac{\eta^2 C^2}{{B^{t-r}}^2 {\varepsilon_k^{t-r}}^2}\right).
    \end{align}

    For ease of representation, let $\kappa_1 = 1 + 2\mu\rho \eta^2 - 2\mu\eta, \kappa_2 = \rho\eta^2, \kappa_3 = \frac{\rho d}{2}$. Thus, the convergence upper bound of Equ. (\ref{R:5}) after $T + 1$ rounds can be formulated as
    \begin{align}
        & E\left[F(\mathbf w^{T+1}) - F(\mathbf w^*)\right] \notag \\
        \leq & {\kappa_1}^{T+1} E[F(\mathbf w^0) - F(\mathbf w^*)] \notag \\
        &+ \sum_{t=0}^T {\kappa_1}^t \left(\kappa_2\sum_{k=1}^N \frac{B_k^{T-t}}{B^{T-t}} \lambda_k^{T-t} + 
        \kappa_3 \sum_{k=1}^N \frac{\eta^2 C^2}{{B^{T-t}}^2 {\varepsilon_k^{T-t}}^2}\right) \notag \\
        \leq & {\kappa_1}^{T+1} E[F(\mathbf w^0) - F(\mathbf w^*)] \notag \\
        &+ \sum_{t=0}^{T} {\kappa_1}^{T-t} \left(\kappa_2\sum_{k=1}^N \frac{B_k^t}{B^t} \lambda_k^t + 
        \kappa_3 \sum_{k=1}^N \frac{\eta^2 C^2}{{B^t}^2 {\varepsilon_k^t}^2}\right).
    \end{align}

    Further, the convergence upper bound after $T$ rounds is
    \begin{align}
        & E[F(\mathbf w^T) - F(\mathbf w^*)] \notag \\
        \leq & {\kappa_1}^T E[F(\mathbf w^0) - F(\mathbf w^*)] \notag \\
        &+ \sum_{t=0}^{T-1} {\kappa_1}^{T-1-t} \left(\kappa_2\sum_{k=1}^N \frac{B_k^t}{B^t} \lambda_k^t + 
        \kappa_3 \sum_{k=1}^N \frac{\eta^2 C^2}{{B^t}^2 {\varepsilon_k^t}^2}\right).
    \end{align}
\end{proof}

\section{Proof of Proposition \ref{the:2}}
\begin{proof}
    According to Equ. (\ref{method:new utility opti}), we have 
    \begin{align}
        f(B_k^t) = \gamma_2(\alpha_k + Q_k^t) (B_k^t)^2 - \frac{R^t}{\phi^t} \log (B_k^t).
    \end{align}

    Then we get the first derivative of $f(B_k^t)$ by
    \begin{align}
        f^{'}(B_k^t) = 2(\alpha_k + Q_k^t) B_k^t - \frac{R^t}{\phi^t} \cdot \frac{1}{B_k^t}.
    \end{align}

    Let $f^{'}(B_k^t) = 0$, we have
    \begin{align}
        (B_k^t)^* = \sqrt{\frac{R^t}{2\phi^t(\alpha_k + Q_k^t)}}.
    \end{align}

    Afterwards, we get the second derivative of $f(B_k^t)$ by
    \begin{align}
        f^{''}(B_k^t) = 2(\alpha_k + Q_k^t) + \frac{R^t}{\phi^t} \cdot \frac{1}{(B_k^t)^2} > 0.
    \end{align}

    If $\alpha_k + Q_k^t > 0$ and $R^t > 0$ hold, $(B_k^t)$ is the optimal solution to minimize Equ. (\ref{method:new utility opti}). The proof for $(\varepsilon_k^t)^*$ is analogous and thus omitted.
\end{proof}

\section{Proof of Proposition \ref{the:3}}
\begin{proof}
    We set
    \begin{equation}
        X_k^t = \frac{\gamma_2}{2\phi^t(\gamma_2 \alpha_k + Q_k^t)}, 
        Y_k^t = \frac{\gamma_2}{2\phi^t(\gamma_2\beta_k + Z_k^t)}.
    \end{equation}
    
    Proposition \ref{the:2} can be reformulated as
    \begin{align}
        (B_k^t)^* = \sqrt{R^t X_k^t},
        (\varepsilon_k^t)^* = \sqrt{R^tY_k^t}.
    \end{align}

    Afterwards, we have
    \begin{align}
        \label{t:1}
        \frac{1}{(B^t)^2} = &  \frac{1}{R^t\left( \sum_{i=1}^N (X_i^t)^{\frac{1}{2}} \right)^2}, 
        \frac{1}{(\varepsilon_k^t)^2} = \frac{1}{R^tY_k^t}. 
    \end{align}

    Substituting Equ. (\ref{t:1}) into Equ. (\ref{game:cost}), we have
    \begin{align}
        f(R^t) 
        = & \gamma_1 R^t + \sum_{k=1}^N \frac{{\kappa_1}^{T-1-t}\kappa_3\eta^2C^2}{{B^t}^2{\varepsilon_k^t}^2} \notag \\
        = &\gamma_1R^t + \sum_{k=1}^N \cdot \frac{{\kappa_1}^{T-1-t}\kappa_3\eta^2C^2}{R^t\left( \sum_{i=1}^N (X_i^t)^{\frac{1}{2}} \right)^2} \cdot \frac{1}{R^tY_k^t} \notag \\
        = & \gamma_1 R^t + \frac{{\kappa_1}^{T-1-t}\kappa_3 \eta^2C^2}{(R^t)^2} \cdot \frac{\sum_{k=1}^N(Y_k^t)^{-1}}{\left( \sum_{i=1}^N (X_i^t)^{\frac{1}{2}} \right)^2}.
    \end{align}
    
    Then we get the first derivative of $f(R^t)$ by
    \begin{align}
        f^{'}(R^t) = \gamma_1 - \frac{2}{(R^t)^3} \cdot {\kappa_1}^{T-1-t}\kappa_3 \eta^2C^2 \cdot \frac{\sum_{k=1}^N (Y_k^t)^{-1}}{\left( \sum_{i=1}^N (X_i^t)^{\frac{1}{2}} \right)^2}.
    \end{align}
    
    Let $f^{'}(R^t) = 0$, we have
    \begin{align}
        (R^t)^* = \left(\frac{2{\kappa_1}^{T-1-t}\kappa_3 \eta^2C^2}{\gamma_1} \cdot \frac{\sum_{k=1}^N (Y_k^t)^{-1}}{\left( \sum_{i=1}^N (X_i^t)^{\frac{1}{2}} \right)^2} \cdot \right)^{\frac{1}{3}}.
    \end{align}
\end{proof}

\section{Proof of Proposition \ref{the:4}}
\begin{proof}
    According to the definition of the mean-field estimator $\phi^t$, we have
    \begin{align}
        \label{fix:0}
        \phi^t = \sum_{k=1}^N\log (B_k^t \varepsilon_k^t),
    \end{align}
    For ease of reading, we rewrite Equ. (\ref{fix:0}) as
    \begin{align}
        \label{fix:1}
        \phi^t = \Psi_1 (B_1^t, \varepsilon_1^t, B_2^t, \varepsilon_2^t, \cdots, B_N^t, \varepsilon_N^t).
    \end{align}
    By inserting Equ. (\ref{method:edge node answer 1}) and Equ. (\ref{method:edge node answer 2}) of Proposition \ref{the:2} into Equ. (\ref{fix:1}), we have
    \begin{align}
        \label{fix:2}
        \phi^t = \Psi_2(\phi^t, R^t),
    \end{align}
    where $\phi^t$ is a function of $(\phi^t, R^t)$.
    Further, by inserting Equ. (\ref{method:cloud server answer}) of Proposition \ref{the:3} into Equ. (\ref{fix:2}), we have
    \begin{align}
        \phi^t = \Psi_4(\phi^t),
    \end{align}
    where $\phi^t$ is literally a function of itself.
\end{proof}

Next, we examine whether a fix point exists for $\Psi_4$. We bound $\phi^t$ as $[0, C]$. On the one hand, $\phi^t >= 0$ holds when $B_k^t\cdot\varepsilon_k^t >= 1$ for all edge nodes $k \in [1, N]$, which is a common assumption in practice \cite{huang2024collaboration}. On the other hand, $\phi^t <= C$ holds since data volume and privacy budget of an edge node are limited according to Equ. (\ref{game:utility resource}).
In general, the domain and range of $\Psi_4$ can be bounded as $\Pi = [0, C]$.

Since $\Psi$ is a continuous mapping from $\Pi$ to $\Pi$, according to Brouwer's fixed-point theorem, $\Psi_4$ has a fix point in $\Pi$ for $\phi^t$.
\end{document}